\newcommand{\ignore}[1]{}
\newcommand{\R}{{\mathbb{R}}}
\newtheorem{thm}{Theorem}[section]
\newtheorem{cor}[thm]{Corollary}
\newtheorem{lemma}[thm]{Lemma}
\theoremstyle{definition}
\newtheorem{defn}[thm]{Definition}
\theoremstyle{remark}
\newtheorem{remark}[thm]{Remark}
\begin{document}

\title{Unentangled Measurements and Frame Functions}

\author{Ji\v{r}\'{\i} Lebl}
\email{lebl@math.okstate.edu}
\thanks{The first author was in part supported by NSF grant DMS-1362337 and
Oklahoma State University's DIG and ASR grants.}
\affiliation{Department of Mathematics, Oklahoma State University,
Stillwater, OK 74078, USA}

\author{Asif Shakeel}
\email{asif.shakeel@gmail.com}
\noaffiliation

\author{Nolan Wallach}
\email{nwallach@ucsd.edu}
\affiliation{Department of Mathematics,
University of California, San Diego,
9500 Gilman Drive, La Jolla, CA 92093-0112, USA}

\date{\today}

\begin{abstract}
Gleason's theorem asserts the equivalence of von Neumann's density operator
formalism of quantum mechanics and frame functions, which are functions on
the pure states that sum to $1$ on any orthonormal basis of Hilbert space of
dimension at least $3$. The {\it unentangled frame functions} are initially
only defined on unentangled (that is, product) states in a multi-partite
system. The third author's {\it Unentangled Gleason's Theorem} shows that
{\it unentangled frame functions} determine unique density operators if and
only if each subsystem is at least $3$-dimensional. In this paper, we
determine the structure of unentangled frame functions in general. We first
classify them for multi-qubit systems, and then extend the results to
factors of varying dimensions including countably infinite dimensions
(separable Hilbert spaces). A remarkable combinatorial structure emerges,
suggesting possible fundamental interpretations.
\end{abstract}

\keywords{Quantum physics, Gleason’s Theorem, Frame function, Product state,
Mixed state, Entanglement, Quantum measurements}

\maketitle



\section{Introduction} \label{section:intro}

In von Neumann's~\cite{vn:mgdq} approach to quantum mechanics, the formalism
assumes a factor algebra with an endowed trace function. The mixed states in
this set up are the self-adjoint, positive elements of trace $1$ looked upon
as defining a measure on the set of self-adjoint, idempotent elements. This
led Mackey to ask the natural question: Is every such measure given by a
mixed state? Gleason gave an affirmative answer to this question in the case
of factors of type $\text{I}_n$ with $n > 2$ (i.e. the bounded operators on
a separable Hilbert space). In his approach to his proof, he introduced the
notion of frame function (a non-negative function on the pure states that
sums to $1$ on an orthonormal basis of the Hilbert space) which is easily
seen to be equivalent to that of measure on the set of self-adjoint
idempotents.  Gleason's theorem shows that such a function, $f$, must be of
the form $f(v) = \left\langle v|A|v\right\rangle$ for $v$ any unit norm
vector in  the Hilbert space and $A$ a non-negative, self-adjoint operator
of trace $1$ if the dimension of the Hilbert space is at least $3$. In many
contexts of quantum information theory that deal with state spaces of many
independent particles, the only pure states that are sampled are product (or
unentangled) states. This led the last named author~\cite{W:Gleason} to ask
if such a sampling of only unentangled states would allow for more general
frame functions. In this paper we classify all the unentangled frame
functions, for an arbitrary but finite number of tensor factors, including
those of countably infinite dimension (separable Hilbert spaces), completing
the treatment of the unentangled frame functions.  A thorough account  of  quantum measurement  theory   in the setting of operator algebras, including   Gealson's theorem and its variants, is in~\cite{h:qmt} (J. Hamhalter).

 The organization  of the rest of the paper is  as follows. We  first introduce the idea of an {\it unentangled frame function} in Section~\ref{section:framefunc}. In Section~\ref{section:fundom}, we choose the fundamental domain in $\mathbb{P}^{1}(\mathbb{C})$ that we use  to identify a vector with its orthogonal.
In Section~\ref{section:framefuncsn} we  classify  all multi-qubit frame functions. Section~\ref{section:genframefun} generalizes this classification further to include the case when some of the tensor factors are of dimensions at least $3$ and at most countable (separable Hilbert spaces).
Section~\ref{section:conc} concludes our discussion with some remarks.


\section{Unentangled frame functions}
\label{section:framefunc}

Let us take a brief look at unentangled Gleason setup as in~\cite{W:Gleason}, and make more precise the ideas from the Introduction. Let

\[
\mathcal{H}=H_{1}\otimes H_{2}\otimes\cdots\otimes H_{n}
\]
with $\dim H_{i}\geq2$. Technically we should be looking at the completed
tensor product if two or more of the factors are infinite dimensional.
However, since we will only be looking at product vectors this will not be
necessary. Applying a permutation of the factors we may assume that the first
$k$ of the $H_{i}$ are of dimension $2$ and all of the rest have dimension
$>2$. Thus
\[
\mathcal{H}=\otimes^{k}\mathbb{C}^{2}\otimes H_{k+1}\otimes\cdots\otimes
H_{n}
\]
with $\dim H_{i}>2$ and if $n=k$ then by convention the last factor is
$\mathbb{C}$.
$\mathcal{H}$ is given  the tensor product Hilbert
structure,  $\left\langle \ldots |\ldots \right\rangle $. A vector in $\mathcal{H}$
is called   unentangled  (a product vector) if it is a tensor product
of unit vectors, one from each $H_{i}$ factor. Two such vectors $v_{1}\otimes
v_{2}\otimes\cdots\otimes v_{n}$ and $w_{1}\otimes w_{n}\otimes\cdots\otimes
w_{n}$ are orthogonal 
\[
\left\langle v_{1}\otimes v_{2}\otimes\cdots\otimes v_{n}|w_{1}\otimes
w_{2}\otimes\cdots\otimes w_{n}\right\rangle =0
\]
if and only if there is at least one $i$ with $\left\langle v_{i}
|w_{i}\right\rangle =0$.  An Unentangled Othonormal Basis (UOB) $\{u_i\}$ is a basis  of $\mathcal{H}$ consisting of orthogonal
(unit norm)
unentangled vectors. Let $\Sigma$ be the set of all unentangled vectors in $\mathcal{H}$. 
\begin{defn} \label{ffdef}
An unentangled frame function is a map
\begin{equation*}
f \colon \Sigma \to \R_{\geq 0},
\end{equation*}
  such that for every UOB $\{u_i\}$, 
\begin{equation} \label{ffsum}
\sum_i f(u_i) = 1.
\end{equation}
\end{defn}

In the next two sections we will be dealing with the multi-qubit case, so we specialize the above to prepare for it. Let  $\mathcal{H}_{n} = {\otimes}^{n}\mathbb{C}^{2}$ be the space of $n$ qubits.
 A UOB is then a basis $\{u_{0}, u_{1}, \ldots,
u_{2^{n}-1}\}$ of $\mathcal{H}_{n}$ consisting of orthogonal
(unit norm)
unentangled vectors.
 
\section{A fundamental domain} \label{section:fundom}

Let $\sigma:\mathbb{C}^{2}\rightarrow\mathbb{C}^{2}$ be defined by
$v=(x,y)\longmapsto \hat v = (-\bar{y},\bar{x})$.
We note that $\left\langle v|\sigma
v\right\rangle =0$ and if $v$ is a state then up to phase $\sigma v$ is the
unique state perpendicular to $v$. The $\sigma$ induces a map of
$\mathbb{P}^{1}(\mathbb{C})$ to itself which we also denote by $\sigma$. We
have the standard map
\[
\mathbb{P}^{1}(\mathbb{C})\rightarrow\mathbb{C\cup\infty}
\]
given by
\[
(x,y)\longmapsto\frac{x}{y}.
\]
We note that under this identification as a map from $\mathbb{C\cup\infty}$ to
itself
\[
\sigma z=-\frac{1}{\bar{z}}.
\]
In particular, on $S^{1}$ it is given by $z\longmapsto-z$.

Here is a simple fundamental domain for $\sigma$:%
\[
F=\{z\in\mathbb{C}\mid |z|<1\}\cup\{z\in\mathbb{C}\mid |z|=1,\operatorname{Im}%
z>0\}\cup\{1\}.
\]
A frame function $f$
in a single qubit space  corresponds to an arbitrary function
$\phi \colon F \to [0,1]$ by letting $f(a) = \phi(a)$ if $a \in F$ and since the sum of the values of $f$ over orthogonal vectors is $1$, 
$f(a) = 1-\phi(\sigma a)$ if $a \notin F$.


\section{Unentangled frame functions in $n$ qubits}
\label{section:framefuncsn}

Let
$\mathcal{H}_{n}$ denote $n$--qubit space.
We choose a fundamental domain $F$
for the map $[z]\rightarrow\sigma([z])=[\hat{z}]$ in one qubit. We set
$F_{n}=F\otimes F\otimes\cdots\otimes F$ ($n$ copies). Let $\Omega
_{n}=\{1,\ldots,n\}$. If $J\subset\Omega_{n}$ is a subset we define $\sigma
_{J}=T_{1}\otimes T_{2}\otimes\cdots\otimes T_{n}$ with $T_{j}=\sigma$ if
$j\in J$ and $T_{j}=\mathbb{I}$ (identity operator),
if $j\notin J$. We set $J^{c}=\Omega_{n}-J$. We note
\[
\Sigma=\bigcup_{J\subset\Omega_{n}}\sigma_{J}(F_{n})
\]
is a disjoint union. If $z=z_{1}\otimes\cdots\otimes z_{n}\in F_{n}$, if
$J\subset\Omega_{n}$ and if $J^{c}=\{i_{1},\ldots,i_{k}\}$ with $i_{1}%
<i_{2}<\cdots<i_{k}$ then we set $\tau_{J}(z)=(z_{i_{1}},\ldots,z_{i_{k}})$ if
$J\neq\Omega_{n}$, if $J=\Omega_{n}$ then use the symbol $\omega$ for the value.

\begin{lemma} \label{lemma1}
If $f$ is a function on the product states of $\mathcal{H}_{n}$ such that $f$
sums to a fixed constant $c$ on all UOB's then for each $J\subset
\Omega_{n}\ $there exists a function $\phi_{J}$ on $F^{n-|J|}$ (direct product
of $n-|J|$ copies of $F$, $F^{0}=\{\omega\},\phi_{\Omega_{n}}(\omega)=c$) such
that if $z\in F_{n}$ then%
\[
\sum_{L\subset J}f(\sigma_{L}(z))=\phi_{J}(\tau_{J}(z)).
\]
\end{lemma}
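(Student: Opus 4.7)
The plan is to reduce the lemma to a single auxiliary statement, which I will call the \emph{pair-flip invariance}: for any qubit index $j_0$, any product state $u$ in the $(n-1)$-qubit tensor factor obtained by omitting slot $j_0$, and any unit vector $w\in\mathbb{C}^2$, the two-term sum
\[
H_{j_0}(w;u):=f(u\otimes_{j_0}w)+f(u\otimes_{j_0}\sigma w)
\]
is independent of $w$. Here $\otimes_{j_0}$ denotes insertion into the $j_0$-th tensor slot. Everything else in the lemma will then be a short combinatorial manipulation.

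To prove the pair-flip invariance, I extend the pair $\{u\otimes_{j_0}w,\,u\otimes_{j_0}\sigma w\}$ to a UOB of $\mathcal{H}_n$ whose remaining vectors avoid $w$ entirely. Write $u^{(N)}$ for the product state obtained from $u$ by applying $\sigma$ in each slot $i\in N\subset\Omega_n\setminus\{j_0\}$ (so $u^{(\emptyset)}=u$); the family $\{u^{(N)}\}_N$ is the direct-product UOB of the $(n-1)$-qubit factor built on $u$. For each nonempty $N$ pick an arbitrary ONB $\{v_1^{(N)},v_2^{(N)}\}$ of the $j_0$-th $\mathbb{C}^2$. The collection
\[
\{w\otimes_{j_0}u,\,\sigma w\otimes_{j_0}u\}\;\cup\;\bigcup_{\emptyset\neq N}\{v_1^{(N)}\otimes_{j_0}u^{(N)},\,v_2^{(N)}\otimes_{j_0}u^{(N)}\}
\]
is a UOB of $\mathcal{H}_n$: vectors with different $N$ are separated by $u^{(N)}\perp u^{(N')}$, and the two vectors sharing a common $N$ are separated in slot $j_0$. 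Its frame-sum equals $c$, and the tail $(N\neq\emptyset)$ has no $w$-dependence, so $H_{j_0}(w;u)$ must be constant in $w$.

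Given the pair-flip invariance, the lemma follows quickly. For $J=\emptyset$ set $\phi_\emptyset:=f|_{F_n}$; for $J=\Omega_n$ the family $\{\sigma_Lz\}_{L\subset\Omega_n}$ is itself a UOB of $\mathcal{H}_n$ (the direct-product UOB on $z$), so $\sum_{L\subset\Omega_n}f(\sigma_Lz)=c$ and we take $\phi_{\Omega_n}(\omega)=c$. For a nonempty proper $J\subset\Omega_n$, pick any $j_0\in J$ and split $L\subset J$ by whether $j_0\in L$:
\[
\sum_{L\subset J}f(\sigma_Lz)=\sum_{L\subset J\setminus\{j_0\}}\bigl[f(\sigma_Lz)+f(\sigma_{L\cup\{j_0\}}z)\bigr]=\sum_{L\subset J\setminus\{j_0\}}H_{j_0}\bigl(z_{j_0};\,v^{(L)}\bigr),
\]
where $v^{(L)}$ is the $(n-1)$-qubit product state obtained from $\sigma_Lz$ by deleting slot $j_0$. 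Because $j_0\notin L$, the factor $v^{(L)}$ involves neither $z_{j_0}$ nor $\sigma z_{j_0}$, so pair-flip invariance makes every summand independent of $z_{j_0}$, and hence so is the whole sum. Iterating this over $j_0\in J$ shows the sum depends only on $(z_i)_{i\in J^c}=\tau_J(z)$, which is the required $\phi_J$.

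The only delicate point will be engineering the UOB in the pair-flip invariance, in particular noticing that once $\{u\}$ is extended to a direct-product UOB of the $(n-1)$-qubit factor, the ONB choices in slot $j_0$ attached to each $u^{(N)}$ with $N\neq\emptyset$ remain completely \emph{free}; this freedom is exactly what concentrates the $w$-dependence in the head pair. Everything else is bookkeeping, and no induction on $n$ or on $|J|$ is required.
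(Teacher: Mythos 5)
Your proof is correct, and it reaches the lemma by a genuinely different decomposition than the paper's. The paper argues in one shot: after permuting so that $J=\{1,\dots,j\}$, the set $Z=\{\sigma_L(z) : L\subset J\}$ is an orthonormal basis of the subspace $\otimes^{j}\mathbb{C}^{2}\otimes z_{j+1}\otimes\cdots\otimes z_{n}$, which depends only on $\tau_J(z)$; hence any two choices of $z_1,\dots,z_j$ yield bases of the same subspace, admit a common extension $u_1,\dots,u_r$ to a UOB, and therefore give the same sum $c-\sum f(u_i)$. You instead isolate a two-term pair-flip invariance (your $H_{j_0}$), prove it by an explicit UOB whose tail avoids $w$, and then recover the general statement by splitting the sum over $L\subset J$ according to membership of $j_0$ and iterating over the coordinates of $J$. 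The underlying mechanism --- hold the complement of a partial product basis fixed while varying the basis inside it --- is the same in both arguments, but the trade-off differs: the paper needs the observation that $Z$ spans a subspace determined by $\tau_J(z)$ together with the existence of some UOB extension of $Z$, whereas you only ever invoke a completely explicit extension of a rank-two set, at the cost of an extra combinatorial layer (the pairing, plus the one-coordinate-at-a-time argument that independence in each $z_{j_0}$, $j_0\in J$, separately yields dependence only on $\tau_J(z)$). Both are sound; yours is more elementary and self-contained in its UOB constructions, the paper's is shorter.
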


\begin{proof}
After permuting the factors we may assume that $J=\{1,\ldots,j\}$, thus the sum on
the left hand side is
\[
\sum_{I\subset\{1,\ldots,j\}}f(\sigma_{I}(z_{1}\otimes\cdots\otimes z_{j}\otimes
z_{j+1}\otimes\cdots\otimes z_{n}))
\]
call it $b$, and note that $\sigma_{I}$ acts only  on $z_{1}\otimes\cdots\otimes z_{j}$ as $\sigma$ and as $\mathbb{I}$ everywhere else. Observing that if the set
\[
Z=\{\sigma_{I}(z_{1}\otimes\cdots\otimes z_{j}\otimes z_{j+1}\otimes
\cdots\otimes z_{n}) \mid I\subset J\}
\]
is extended to a $UOB$ by adjoining elements $u_{1},\ldots,u_{r}$ with
$r=2^{j}(2^{n-j}-1)$ then $\sum f(u_{i})=1-b$ no matter how we found the
extension. Also $Z$ is an orthonormal basis of $\otimes^{j}\mathbb{C}%
^{2}\otimes z_{j+1}\otimes\cdots\otimes z_{n}$ which only depends on
$z_{j+1}\otimes\cdots\otimes z_{n}$. This proves the result.
\end{proof}

Now applying inclusion exclusion we have in the notation of the previous lemma

\begin{lemma} \label{lemma2}
If $z\in F_{n}$ and if $f$ is a function on the product states satisfying the
hypotheses of Lemma~\ref{lemma1} then%
\[
f(\sigma_{J}(z))=\sum_{L\subset J}(-1)^{|J-L|}\phi_{L}(\tau_{L}(z)).
\]
\end{lemma}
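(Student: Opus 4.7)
The plan is to recognize Lemma~\ref{lemma2} as a direct instance of Möbius inversion over the Boolean lattice of subsets of $J$, with the data supplied by Lemma~\ref{lemma1}. Fix $z \in F_n$ and, for notational economy, set $h(L) = f(\sigma_L(z))$ and $g(L) = \phi_L(\tau_L(z))$ for every $L \subset \Omega_n$. Lemma~\ref{lemma1}, applied not only to $J$ itself but to every $L \subset J$, yields the system of identities
\[
g(L) = \sum_{M \subset L} h(M), \qquad L \subset J.
\]
The claim of the lemma, in this notation, is precisely
\[
h(J) = \sum_{L \subset J} (-1)^{|J - L|} g(L).
\]

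To verify this I would substitute the definition of $g(L)$ into the right-hand side and exchange the order of summation:
\[
\sum_{L \subset J} (-1)^{|J - L|} g(L) = \sum_{L \subset J} (-1)^{|J - L|} \sum_{M \subset L} h(M) = \sum_{M \subset J} h(M) \sum_{L : M \subset L \subset J} (-1)^{|J - L|}.
\]
Setting $K = J \setminus L$, the inner sum becomes $\sum_{K \subset J \setminus M} (-1)^{|K|}$, which equals $0$ whenever $J \setminus M \neq \emptyset$ and equals $1$ when $M = J$. Hence the only surviving term is $h(J)$, as required.

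There is essentially no obstacle here once the correct bookkeeping is in place; the only subtlety worth checking is that $\tau_L(z)$ and $\phi_L$ are well defined for every $L \subset J$ (including the extreme cases $L = \emptyset$ and $L = \Omega_n$, where the conventions $F^0 = \{\omega\}$ and $\phi_{\Omega_n}(\omega) = c$ from Lemma~\ref{lemma1} apply). With that covered, the argument is just the standard Möbius inversion on the Boolean lattice.
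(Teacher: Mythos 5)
Your proposal is correct and takes essentially the same route as the paper: the paper also applies Lemma~\ref{lemma1} to every $L\subset J$ and then invokes M\"obius inversion on the Boolean lattice with $\alpha(L)=f(\sigma_L(z))$ and $\beta(L)=\phi_L(\tau_L(z))$, citing Rota for the inversion formula rather than verifying the alternating-sum identity by hand as you do. The only difference is that you supply the standard two-line proof of the inversion step, which the paper treats as a black box.
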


\begin{proof}
Inclusion exclusion says: Let $\alpha$ and $\beta$ be functions from the set
of all subsets of $\Omega_{n}$ to $\mathbb{C}$ and such that if $J\subset
\Omega_{n}$ then
\[
\sum_{L\subset J}\alpha(L)=\beta(J) ,
\]
then%
\[
\alpha(J)=\sum_{L\subset J}(-1)^{|J-L|}\beta(L).
\]
(c.f. Rota~\cite{r:mobius}).
The lemma follows from this assertion by taking $\alpha(J)=f(\sigma_{J}(z))$ and $\beta(J)=\phi_{J}(\tau_{J}(z))$.
\end{proof}

\begin{thm} \label{theorem3}
If $L\subsetneqq\Omega_{n}$ let $\phi_{L}$ be a real valued function on
$F_{n-\left\vert L\right\vert }$. Assume that $\phi_{\Omega_{n}}(\omega)=c$
then the function $f:\mathcal{H}_{n}\rightarrow\mathbb{R}$ defined by%
\[
f(\sigma_{J}(z))=\sum_{L\subset J}(-1)^{\left\vert J-L\right\vert }\phi
_{L}(\tau_{L}(z))
\]
for $J\subset\Omega_{n}$ and $z\in F_{n}$ satisfies
\[
\sum_{i=1}^{2^{n}}f(z_{i})=c\text{.}
\]
if $z_{1},\ldots,z_{2^{n}}$ is a UOB.
\end{thm}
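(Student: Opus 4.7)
The plan is to argue by induction on the number of qubits $n$. The base case $n=1$ is immediate: every UOB of $\mathcal{H}_{1}$ has the form $\{v,\sigma v\}$ with $v\in F$, and the defining formula gives $f(v)+f(\sigma v) = \phi_{\emptyset}(v)+(c-\phi_{\emptyset}(v)) = c$. As a useful warm-up that also motivates the combinatorial shape of the formula, observe that for any fixed $z\in F_{n}$ the ``product'' family $\{\sigma_{J}(z):J\subset\Omega_{n}\}$ is itself a UOB (indeed the one implicit in Lemmas~\ref{lemma1} and~\ref{lemma2}), and for it the theorem follows by a swap of summation:
\[
\sum_{J\subset\Omega_{n}} f(\sigma_{J}(z)) \;=\; \sum_{L\subset\Omega_{n}}\phi_{L}(\tau_{L}(z))\sum_{L\subset J\subset\Omega_{n}}(-1)^{|J-L|} \;=\; \phi_{\Omega_{n}}(\omega) \;=\; c,
\]
using the elementary identity $\sum_{M\subset S}(-1)^{|M|}=0$ for $S\neq\emptyset$ applied to $S=\Omega_{n}\setminus L$.

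For the inductive step ($n\geq 2$) the plan relies on the structural fact that every UOB of $\mathcal{H}_{n}$ admits a \emph{balanced splitting}: there exist a position $p\in\Omega_{n}$ and a one-qubit state $v$ such that the basis partitions into two halves $B_{v}\sqcup B_{\sigma v}$, where $B_{v}$ collects those basis vectors whose $p$-th tensor factor is projectively $v$ (and similarly for $B_{\sigma v}$), and each half, after deletion of the $p$-th factor, forms a UOB of $\mathcal{H}_{n-1}$. Granting this, I would substitute the defining formula into $\sum_{k\in B_{v}} f(z_{k})$ and project out position $p$. When $v\in F$, every $z_{k}\in B_{v}$ has $p\notin J_{k}$, and the formula for $f(z_{k})$ reduces precisely to the $(n-1)$-qubit inclusion-exclusion formula with new coefficients $\tilde\phi^{v}_{L}$ obtained by freezing $v$ in the $p$-th slot of $\phi_{L}$; its top value is $\tilde\phi^{v}_{\Omega_{n-1}}(\omega)=\phi_{\Omega_{n}\setminus\{p\}}(v)$, so the inductive hypothesis yields $\sum_{k\in B_{v}} f(z_{k}) = \phi_{\Omega_{n}\setminus\{p\}}(v)$. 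When $v\notin F$, splitting $\sum_{L\subset J_{k}}$ according to whether $p\in L$ produces the analogous $(n-1)$-qubit formula with coefficients $\tilde\phi^{v}_{L_{0}}(q) = \phi_{L_{0}\cup\{p\}}(q) - \phi_{L_{0}}(q\text{ extended by }\sigma v\text{ at }p)$ and top value $c-\phi_{\Omega_{n}\setminus\{p\}}(\sigma v)$, giving $\sum_{k\in B_{v}} f(z_{k}) = c - \phi_{\Omega_{n}\setminus\{p\}}(\sigma v)$. Since exactly one of $v,\sigma v$ lies in $F$, the two partial sums add to $c$, completing the induction.

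The main obstacle is the balanced-splitting structural lemma for multi-qubit UOBs, which is considerably stronger than the mere existence of an orthogonal pair within the basis. This should either be extractable from the arguments of~\cite{W:Gleason} or proved separately by analyzing the finite set of projective classes occurring at each tensor position of the UOB (with an independent small-case check for $n=2$). The remaining steps of the induction are essentially bookkeeping: verifying that the restricted coefficients $\tilde\phi^{v}_{L}$ fall into the form required by the theorem for $n-1$ qubits and confirming the correct top value in each of the two cases.
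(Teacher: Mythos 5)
Your base case, and your ``warm-up'' computation showing that $\sum_{J\subset\Omega_n} f(\sigma_J(z))=c$ for a product UOB, are both correct, and your overall plan (induction on $n$, splitting a UOB along one tensor position) is the same in spirit as the paper's. However, the step you yourself flag as the main obstacle --- the \emph{balanced splitting} lemma, asserting that every UOB has some position $p$ at which only two projective classes $v,\sigma v$ occur --- is not merely unproven; it is false for $n\geq 3$. Take $\mathcal{H}_3$, fix generic one-qubit states $a,a_2,b,x,c,e$ (with $a_2\notin\{a,\hat a\}$, $x\notin\{b,\hat b\}$, $e\notin\{c,\hat c\}$ projectively), and consider
\[
\mathcal{B}=\{a\otimes\hat b\otimes e,\ a\otimes\hat b\otimes\hat e,\ a\otimes b\otimes\hat c,\ \hat a\otimes\hat b\otimes c,\ \hat a\otimes x\otimes\hat c,\ \hat a\otimes\hat x\otimes\hat c,\ a_2\otimes b\otimes c,\ \hat a_2\otimes b\otimes c\}.
\]
One checks directly that all $28$ pairs are orthogonal, so this is a UOB; yet each of the three positions carries four distinct projective classes (position $1$: $a,\hat a,a_2,\hat a_2$; position $2$: $b,\hat b,x,\hat x$; position $3$: $c,\hat c,e,\hat e$). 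So no position admits your two-class partition, and your inductive step cannot even get started on this basis.

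The correct structural fact (Theorem~6 of~\cite{W:Gleason}, which the paper invokes) is weaker: at a given position the UOB decomposes as $\{a_i\otimes u_{ij}\}\cup\{\hat a_i\otimes v_{ij}\}$ where \emph{several} antipodal pairs $a_i,\hat a_i$ may occur, the $V_i=\operatorname{span}\{u_{ij}\}_j=\operatorname{span}\{v_{ij}\}_j$ are mutually orthogonal subspaces summing to $\mathcal{H}_{n-1}$, and the product bases $u_{ij}$ and $v_{ij}$ of $V_i$ may differ. (In the example above, $V_1=(b\otimes c)^{\perp}$, $V_2=\mathbb{C}(b\otimes c)$, with different product bases of $V_1$ on the $a$- and $\hat a$-sides.) Your computation of $\sum_{k\in B_v}f(z_k)$ would be the $r=1$ special case. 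To close the gap you need the two extra ingredients the paper uses: (i) the inductive hypothesis applied to $f_{a_i}$ shows that $\sum_j f_{a_i}(u_{ij})$ depends only on $a_i$ and $V_i$, not on the particular product basis of $V_i$, so $u_{ij}$ can be replaced by $v_{ij}$; and (ii) the identity that $f(a\otimes w)+f(\hat a\otimes w)$ is independent of $a\in F$ (which follows from your defining formula exactly as in your two-case computation), so all the $a_i$ can be replaced by a single $a$, after which one is reduced to a sum over a single product UOB and your warm-up identity applies.
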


\begin{proof}
We first note that if $z\in F_{n-1}$ and $J\subset\{2,\ldots,n\}$ then

1. $f(a\otimes\sigma_{J}(z))+f(\hat{a}\otimes\sigma_{J}(z))$ is independent of
$a$.

Indeed,
\[
f(a\otimes\sigma_{J}(z))=\sum_{L\subset J}(-1)^{\left\vert J|-|L\right\vert
}\phi_{L}(\tau_{L}(a\otimes z))
\]
and (if $1\in L$ write $L=\{1\}\cup L^{\prime}$ with $L^{\prime}\subset
\{2,\ldots,n\}$)
\[
f(\hat{a}\otimes\sigma_{J}(z))=f(\sigma_{J\cup\{1\}}(a\otimes z))=\sum
_{L\subset J\cup\{1\}}(-1)^{|J|+1-|L|}\phi_{L}(\tau_{L}(a\otimes z))=
\]%
\[
-\sum_{L\subset J}(-1)^{\left\vert J|-|L\right\vert }\phi_{L}(\tau
_{L}(a\otimes z))+\sum_{L^{\prime}\subset J}(-1)^{|J|-|L^{\prime}|}%
\phi_{L^{\prime}\cup\{1\}}(\tau_{L^{\prime}\cup\{1\}}(a\otimes z)).
\]
We therefore have
\[
f(a\otimes\sigma_{J}(z))+f(\hat{a}\otimes\sigma_{J}(z))=\sum_{L^{\prime
}\subset J}(-1)^{|J|-|L^{\prime}|}\phi_{L^{\prime}\cup\{1\}}(\tau_{L^{\prime
}\cup\{1\}}(a\otimes z)).
\]
This proves 1.\ since $\tau_{L^{\prime}\cup\{1\}}(a\otimes z)$ is independent
of $a$.

We will now prove the theorem by induction on $n$. If $n=1$ the assertion is
that if $a\in F$ then $f(a)+c-f(a)=c$. So the result is true for $n=1$. Now
assume the result for $n-1\geq1$. We now prove it for $n.$ If $a\in F$ then
define $f_{a}(z)=f(a\otimes z)$. Then if $J\subsetneqq\Omega=\{2,\ldots,n\}$ and
$z\in F_{n-1}$ we have
\[
f_{a}(\sigma_{J}(z))=\sum_{L\subset J}(-1)^{\left\vert J-L\right\vert }%
\phi_{L}(\tau_{L}(a\otimes z))
\]
and%
\[
f_{a}(\sigma_{\Omega}(z))=\phi_{\Omega}(a).
\]
Thus the inductive hypothesis implies that $f_{a}$ satisfies%
\[
\sum_{i=1}^{2^{n}-1}f_{a}(z_{i})=\phi_{\Omega}(a)\text{.}
\]
for any UOB $\mathcal{\{}z_{1},\ldots,z_{2^{n-1}}\}$ of $\mathcal{H}_{n-1}$.

We are now ready to prove the theorem. Let $\mathcal{B}=\{z_{1},\ldots,z_{2^{n}%
}\}$ be a UOB. Then Theorem $6$ in~\cite{W:Gleason} implies that there
exist $a_{1},\ldots,a_{r}\in F$, $V_{1},\ldots,V_{r}$ orthogonal subspaces of
$\mathcal{H}_{n-1}$ such that
\[
\mathcal{H}_{n-1}=V_{1}\oplus \cdots \oplus V_{r}
\]
and $u_{ij}$ and $v_{ij},j=1,\ldots,d_{i}$ orthonormal basis of
$V_{i}$ consisting of product vectors such that
\[
\mathcal{B}=\{a_{i}\otimes u_{ij}\mid i=1,\ldots,r,j=1,\ldots,d_{i}\}\cup\{\hat{a}%
_{i}\otimes v_{ij}\mid i=1,\ldots,r,j=1,\ldots,d_{i}\}.
\]
For each $i$ we apply the inductive hypothesis to $f_{a_{i}}$ and find that%
\[
\sum_{j=1}^{d_{i}}f_{a_{i}}(u_{i,j})
\]
depends only on $a_{i}$ and $V_{i}$ and not on the particular orthonormal
basis of $V_{i}$. Thus we can replace $u_{i,j}$ with $v_{i,j}$ without changing
the sum. Now%
\[
f(a_{i}\otimes v_{i,j})+f(\hat{a}_{i}\otimes v_{ij})
\]
is independent of $a_{i}$ $by$ $1.$ Thus we can replace all of the $a_{i}$
with a fixed element $a\in F$ without changing the sum. Thus the sum is given
by%
\[
\sum_{ij}f(a\otimes v_{ij})+\sum_{ij}f(\hat{a}\otimes v_{ij}).
\]
We now observe that if we define%
\[
g(z)=f(a\otimes z)+f(\hat{a}\otimes z)
\]
then (see the proof of 1.)%
\[
g(\sigma_{J}(z))=\sum_{L^{\prime}\subset J}(-1)^{|J|-|L^{\prime}|}\phi_{L^{\prime}\cup\{1\}}(\tau_{L^{\prime}\cup\{1\}}(a\otimes z)).
\]
and
\[
\sum_{J\subset\Omega}g(\sigma_{J}(z))=c
\]
for all $z\in F_{n-1}$. Finally we can apply the inductive hypothesis to
replace the basis $\{v_{ij}\}$ by $\{\sigma_{J}(z) \mid J\subset\Omega\}$ with
$z\in F_{n-1}$ and the theorem is proved.
\end{proof}
A consequence of the proof of this theorem is that it suffices to specify a frame function on the highest dimensional component of the space of UOBs. This is described in\cite{lsw:lduobag}.  A generic UOB in this component is recursively defined as follows. 
\begin{defn}
\begin{equation*}
\mathcal{B} = 
\{
a \otimes \mathcal{B}_1 ,  \hat{a} \otimes \mathcal{B}_2
\} , 
\end{equation*}
for an arbitrary $a \in F$, and where $\mathcal{B}_i, i=1, 2$ are again defined in the same manner as $\mathcal{B}$ for one less qubit.
\end{defn}

\begin{cor}
If $f$ is a function on the product state such that there exists a constant
$c$ and for every generic UOB, $\{z_{1},\ldots,z_{2^{n}}\}$we have%
\[
\sum f(z_{i})=c
\]
then the same is true for every UOB.
\end{cor}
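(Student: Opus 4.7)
The plan is to prove the Corollary by induction on $n$, reusing the machinery from the proof of Theorem~\ref{theorem3}. The base case $n=1$ is immediate, since every UOB of a single qubit is a pair $\{a,\hat a\}$ with $a\in F$ and is therefore already generic.

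For the inductive step, first fix $a\in F$ and study $f_a(w):=f(a\otimes w)$ on the product states of $\mathcal{H}_{n-1}$. For any pair of generic UOBs $\mathcal{B}_1,\mathcal{B}_2$ of $\mathcal{H}_{n-1}$ the union $\{a\otimes\mathcal{B}_1\}\cup\{\hat a\otimes\mathcal{B}_2\}$ is a generic UOB of $\mathcal{H}_n$ whose total $f$-sum is $c$; fixing $\mathcal{B}_2$ and letting $\mathcal{B}_1$ range over all generic UOBs of $\mathcal{H}_{n-1}$ forces $\sum_{u\in\mathcal{B}_1}f_a(u)$ to be a single constant $\gamma(a)$. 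By the inductive hypothesis this constancy then holds on \emph{every} UOB of $\mathcal{H}_{n-1}$, and symmetrically $f_{\hat a}$ sums to $c-\gamma(a)$ on every UOB. Given an arbitrary UOB $\mathcal{B}$ of $\mathcal{H}_n$, apply Theorem~6 of~\cite{W:Gleason} to write $\mathcal{B}=\{a_i\otimes u_{ij}\}\cup\{\hat a_i\otimes v_{ij}\}$ with orthogonal subspaces $V_i\subset\mathcal{H}_{n-1}$ summing to $\mathcal{H}_{n-1}$. As in the proof of Theorem~\ref{theorem3}, each subspace-sum $\sum_j f_{a_i}(u_{ij})$ depends only on $V_i$ (two orthonormal product bases of $V_i$ extend to full UOBs of $\mathcal{H}_{n-1}$, on which $f_{a_i}$ has the same total), so one may replace each $u_{ij}$ by $v_{ij}$, reducing the total to $\sum_{i,j}H_{a_i}(v_{ij})$ with $H_a(w):=f(a\otimes w)+f(\hat a\otimes w)$.

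The main obstacle is the last step, because the factor index $a_i$ still varies with $i$. To bridge this I would first prove the analog of Lemma~\ref{lemma1} under the present hypothesis: for each $J\subset\Omega_n$ and $z\in F_n$, the sum $\sum_{L\subset J}f(\sigma_L(z))$ depends only on $\tau_J(z)$. The set $\{\sigma_L(z):L\subset J\}$ is an orthonormal product basis of $\bigotimes_{i\in J}\mathbb{C}^2\otimes\bigotimes_{i\notin J}z_i$, and recursively splitting on the first qubit (taking $a=z_1$: the set lies entirely on the $z_1$-side when $1\notin J$, or is evenly split across the two sides when $1\in J$) shows that it always extends to a generic UOB of $\mathcal{H}_n$, so the subtractive argument from Lemma~\ref{lemma1} yields the required $\phi_J$. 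Lemma~\ref{lemma2} then gives the explicit formula $f(\sigma_J(z))=\sum_{L\subset J}(-1)^{|J-L|}\phi_L(\tau_L(z))$, at which point the calculation labelled ``1.''\ in the proof of Theorem~\ref{theorem3} applies verbatim to conclude that $H_a(w)$ is independent of $a$. Writing $g(w)$ for this common value, the reduced sum becomes $\sum_{i,j}g(v_{ij})$ over the UOB $\bigcup_i\{v_{ij}\}_j$ of $\mathcal{H}_{n-1}$, which equals $\gamma(a)+(c-\gamma(a))=c$ from the constant-sum property of $f_a+f_{\hat a}$, completing the induction.
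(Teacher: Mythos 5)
Your overall architecture matches the paper's: both arguments come down to showing that summing to $c$ on generic UOBs alone already forces the conclusion of Lemma~\ref{lemma1} (the existence of the functions $\phi_J$), after which Lemma~\ref{lemma2} and the machinery of Theorem~\ref{theorem3} finish the proof. Your first two paragraphs are sound: the recursive structure of generic UOBs does force $\sum_{\mathcal{B}_1} f_a$ to be a constant $\gamma(a)$ over generic $\mathcal{B}_1$, the inductive hypothesis upgrades this to all UOBs of $\mathcal{H}_{n-1}$, and the Theorem~6 decomposition legitimately reduces everything to the single claim that $H_a(w)=f(a\otimes w)+f(\hat a\otimes w)$ is independent of $a$.

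Your bridge to that independence, however, has a genuine gap, and it sits exactly at the step the paper's own proof dismisses as ``clear.'' You argue that since $Z=\{\sigma_L(z):L\subset J\}$ extends to a generic UOB, ``the subtractive argument from Lemma~\ref{lemma1} yields the required $\phi_J$.'' Extendability is not enough. The subtractive argument compares $Z\cup W$ with $Z'\cup W$ for the \emph{same} complementary set $W$ when $\tau_J(z)=\tau_J(z')$; for arbitrary UOBs any product basis $W$ of the orthocomplement serves both, but in the generic setting $Z'\cup W$ need not be generic merely because $Z\cup W$ is. Your specific construction makes this fatal: when $1\in J$ you split on qubit $1$ at $a=z_1$, so the complement has the form $\{z_1\otimes u,\hat z_1\otimes u\}$ for $u$ ranging over part of a generic UOB of $\mathcal{H}_{n-1}$ --- it depends on $z_1$, precisely the coordinate being varied, and its $f$-sum is $\sum_u H_{z_1}(u)$, whose independence of $z_1$ is the very thing you are trying to establish. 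The repair is to build the extension by splitting first on the qubits of $J^c$ at the states $z_i=z_i'$ (for $i\notin J$), pushing the qubits of $J$ to the innermost level, so that the complementary set depends only on $\tau_J(z)$ and is literally common to $z$ and $z'$; equivalently, for the key case, exhibit a generic UOB whose only members involving $a$ are $a\otimes w$ and $\hat a\otimes w$, whence $H_a(w)=c-(\text{a sum independent of }a)$. Note that this uses the generic family of~\cite{lsw:lduobag} in which the splitting qubit may change from level to level, not only top-level splits on the first factor.
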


\begin{proof}
The characterization of the generic UOB in~\cite{lsw:lduobag} makes it clear
that if $f$ sums to $c$ on all generic UOB then $f$ and the functions $\phi_{J}$
have the property in Lemma~\ref{lemma1}. Lemma~\ref{lemma2} and Theorem~\ref{theorem3} now complete the proof.
\end{proof}

Notice that the corollary is interesting only when $n \geq 3$, since when $n
< 3$, every UOB is part of a maximal dimensional family.

\section{General unentangled frame functions} \label{section:genframefun}

In this section we will give a complete description of unentangled frame
functions for separable Hilbert spaces of the form
\[
\mathcal{H}=H_{1}\otimes H_{2}\otimes\cdots\otimes H_{n}
\]
with $\dim H_{i}\geq2$. 

Let $f$ be an unentangled frame function on $\mathcal{H}$. Thus $f$ is a real
valued function product states such that there exists a scalar $c$ such that
if $\{u_{i}\}$ is a $UOB$ then $\sum f(u_{i})=c$.

If $z$ is a product state in $\mathcal{\otimes}^{k}\mathbb{C}^{2}$ then
setting $f_{z}(x)=f(z\otimes x)$ for $x$ a product state in $H_{k+1}%
\otimes\cdots\otimes H_{r}$, $f_{z}$ is an unentangled frame function on
$H_{k+1}\otimes\cdots\otimes H_{r}$ with%
\[
\sum f_{z}(u_{i})=c(z)
\]
for $\{u_{i}\}$ a $UOB$ of $H_{k+1}\otimes\cdots\otimes H_{r}$ (so
$c(z)=\sum_{i}f(z\otimes u_{i})$ depends only on $z$ and $f$). The unentangled
Gleason's theorem~\cite{W:Gleason}  implies that there exists $A(z)$ a trace class, self-adjoint
non-negative operator on $H_{k+1}\otimes\cdots\otimes H_{r}$ such that

1. $\mathrm{tr}A(z)=c(z)$.

2. $f_{z}(u)=\left\langle u|A(z)|u\right\rangle$.

Note that the proof of the unentangled
Gleason theorem  given in~\cite{W:Gleason}  does not use finite dimensionality and therefore applies to the context of this paper.

This proves the following reduction of the problem.

\begin{lemma}
Let $\mathcal{H}=\mathcal{\otimes}^{k}\mathbb{C}^{2}\otimes H_{k+1}%
\otimes\cdots\otimes H_{r}$ be as above. Then an unentangled frame function
for $\mathcal{H}$ is the restriction of one for $\mathcal{\otimes}%
^{k}\mathbb{C}^{2}\otimes H$ with $H$ the completion of $H_{k+1}\otimes
\cdots\otimes H_{r}$.
\end{lemma}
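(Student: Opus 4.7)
The plan is to construct $\tilde{f}$ explicitly from the operators $A(z)$ and then verify both that it restricts to $f$ (immediate by definition) and that it is an unentangled frame function on the $(k+1)$-fold product $\otimes^{k}\mathbb{C}^{2}\otimes H$. Set $\tilde{f}(z\otimes w):=\langle w|A(z)|w\rangle$ for $z$ a product state of $\otimes^{k}\mathbb{C}^{2}$ and $w\in H$ a unit vector; since each $A(z)$ is trace class and hence bounded on $H_{k+1}\otimes\cdots\otimes H_{r}$, it extends uniquely to the completion $H$. The restriction property is immediate: for $w=v_{k+1}\otimes\cdots\otimes v_{r}$, $\tilde{f}(z\otimes w)=f_{z}(w)=f(z\otimes w)$. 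The content is to check the frame function property, which I would do by induction on $k$.

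The base $k=0$ is ordinary Gleason. For the inductive step, the slice $f_{a}(x):=f(a\otimes x)$ at a fixed unit $a\in\mathbb{C}^{2}$ is itself an unentangled frame function on $\otimes^{k-1}\mathbb{C}^{2}\otimes H_{k+1}\otimes\cdots\otimes H_{r}$ summing to a constant $c_{a}$, by the same UOB-extension argument used to see $f_{z}$ is a frame function: fix any UOB of the rest for the $\hat{a}$-slice and let the UOB for the $a$-slice vary. By the inductive hypothesis $f_{a}$ extends to a frame function $\tilde{f}_{a}$ on $\otimes^{k-1}\mathbb{C}^{2}\otimes H$, and uniqueness in the unentangled Gleason theorem on the dim-$\geq 3$ tail identifies $\tilde{f}_{a}$ with the slice $\tilde{f}(a\otimes\cdot)$ defined above. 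Given any UOB $\mathcal{B}$ of $\otimes^{k}\mathbb{C}^{2}\otimes H$, apply Theorem~$6$ of~\cite{W:Gleason} to the first qubit to write $\mathcal{B}=\{a_{i}\otimes U_{ij}\}\cup\{\hat{a}_{i}\otimes V_{ij}\}$, where $\bigoplus_{i}V_{i}=\otimes^{k-1}\mathbb{C}^{2}\otimes H$ and both $\{U_{ij}\}_{j}$ and $\{V_{ij}\}_{j}$ are product-vector bases of $V_{i}$. Two auxiliary facts then close the computation: (i) $\sum_{j}\tilde{f}_{a}(U_{ij})$ depends only on $a$ and $V_{i}$, proved by extending both product-vector UOBs of $V_{i}$ to full UOBs of $\otimes^{k-1}\mathbb{C}^{2}\otimes H$ via a common product-vector UOB of $V_{i}^{\perp}$ (e.g.\ $\bigcup_{i'\neq i}\{U_{i'j}\}_{j}$) and invoking the inductive frame-function property of $\tilde{f}_{a}$; and (ii) $g(y):=\tilde{f}(a\otimes y)+\tilde{f}(\hat{a}\otimes y)$ is independent of $a$. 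Using (i) to replace $\{U_{ij}\}_{j}$ by $\{V_{ij}\}_{j}$ in the first sum and then combining pairs via (ii) reduces the total to $\sum_{i,j}g(V_{ij})$; since $\{V_{ij}\}_{i,j}$ is itself a UOB of $\otimes^{k-1}\mathbb{C}^{2}\otimes H$ and $g$ (a sum of two frame functions) is itself a frame function on this space, this equals the constant $c_{a}+c_{\hat{a}}=c$.

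The main obstacle is (ii): showing that $A(a\otimes z)+A(\hat{a}\otimes z)$ is independent of $a\in\mathbb{C}^{2}$ for every product state $z$ of $\otimes^{k-1}\mathbb{C}^{2}$. By uniqueness in unentangled Gleason on the dim-$\geq 3$ tail (which is where the $\dim H_{i}\geq 3$ hypothesis is genuinely used), it suffices to prove the scalar identity $f(a\otimes x)+f(\hat{a}\otimes x)=f(a'\otimes x)+f(\hat{a}'\otimes x)$ for all $a,a'\in\mathbb{C}^{2}$ and all product states $x$ of $\otimes^{k-1}\mathbb{C}^{2}\otimes H_{k+1}\otimes\cdots\otimes H_{r}$. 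This in turn follows from the $f$-frame-function identity applied to the mixed UOB of $\mathcal{H}$ given by $\{a\otimes x,\hat{a}\otimes x\}\cup\{a'\otimes v_{\ell},\hat{a}'\otimes v_{\ell}\}_{\ell}$, with $\{v_{\ell}\}$ any product-vector UOB of $x^{\perp}$ in the $(r-1)$-fold rest: substituting $\sum_{\ell}f(a'\otimes v_{\ell})=c_{a'}-f(a'\otimes x)$ and its $\hat{a}'$ analogue, together with $c_{a'}+c_{\hat{a}'}=c$, collapses the frame sum exactly into the required identity.
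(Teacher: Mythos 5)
Your construction of the extension is exactly the paper's: slice $f$ over product states $z$ of the qubit part, apply the unentangled Gleason theorem to the tail (where every factor has dimension at least $3$) to obtain the trace-class operators $A(z)$, and set $\tilde f(z\otimes w)=\langle w|A(z)|w\rangle$ on the completion $H$. The paper essentially stops there: it treats the lemma as a corollary of this construction and defers the verification that $\tilde f$ sums to $c$ on every UOB of $\otimes^{k}\mathbb{C}^{2}\otimes H$ to the classification theorem that follows, which is proved by a double-slicing argument --- for each state $u\in H$ the map $z\mapsto\langle u|A(z)|u\rangle$ is an unentangled frame function on the finite-dimensional space $\otimes^{k}\mathbb{C}^{2}$, so Lemmas \ref{lemma1} and \ref{lemma2} and Theorem \ref{theorem3} apply with operator-valued coefficients $\phi_{J}$, Gleason uniqueness on the tail upgrading the scalar identities to operator identities. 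You instead verify the frame-function property directly, rerunning the induction of Theorem \ref{theorem3} one qubit at a time in the presence of the infinite-dimensional factor. Your auxiliary facts are sound: in particular (ii) is correctly reduced to a scalar identity on product states via the mixed UOB $\{a\otimes x,\hat a\otimes x\}\cup\{a'\otimes v_{\ell},\hat a'\otimes v_{\ell}\}_{\ell}$ and then promoted to the operator identity $A(a\otimes z)+A(\hat a\otimes z)=A(a'\otimes z)+A(\hat a'\otimes z)$ by uniqueness in the unentangled Gleason theorem, which is precisely where the $\dim\ge 3$ hypothesis earns its keep.

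The one step you should not pass over silently is the invocation of Theorem~6 of the unentangled Gleason paper for a UOB of $\mathbb{C}^{2}\otimes(\otimes^{k-1}\mathbb{C}^{2}\otimes H)$ with $H$ infinite dimensional. The present paper only ever uses that structure theorem in the finite-dimensional multi-qubit setting (proof of Theorem \ref{theorem3}); in your setting the UOB is countably infinite, the number of distinct first-qubit directions $a_{i}$ may be infinite, and the blocks $V_{i}$ may be infinite dimensional, so you must check that the decomposition $\bigoplus_{i}V_{i}$ with matched product-vector bases $\{U_{ij}\}$ and $\{V_{ij}\}$ still holds, and that the rearrangements and cancellations of infinite subsums in your steps (i) and (ii) are licensed (harmless for non-negative $f$, but worth saying explicitly, since Section \ref{section:genframefun} allows real-valued $f$). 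This is exactly the point the paper's own route avoids by pushing all combinatorics into the finite-dimensional $\otimes^{k}\mathbb{C}^{2}$ factor. If you supply that extension of Theorem~6 (or restrict to non-negative $f$ and argue the block structure directly), your proof is complete, and it has the merit of being self-contained rather than leaning on the theorem that follows the lemma.
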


In light of this lemma we need only classify the unentangled frame functions on
Hilbert spaces of the form $\mathcal{\otimes}^{k}\mathbb{C}^{2}\otimes H$ with
$H$ a separable Hilbert space of whose dimension is not $2$. We note that if
$\dim H=1$ then a self-adjoint operator on $H$ is a real scalar. Let
$\mathcal{T}(H)$ be the space of trace class, self-adjoint operators on $H$.

\begin{thm}
Let for each $J\subset\{1,\ldots,k\}$, $\phi_{J}:F^{k-|J|}\rightarrow
\mathcal{T}(H)$ (here as usual, we set $F^{0}=\{\omega\}$). Let $z\in F^{k}$
and let $u$ be a state in $H$
\[
f(\sigma_{J}(z)\otimes u)=\sum_{L\subset J}(-1)^{|L-J|}\left\langle u|\phi
_{L}(\tau_{L}(z))|u\right\rangle .\qquad \overset{}{(\ast)}%
\]
Let
\[
c=\mathrm{tr}\phi_{\Omega_{k}}(\omega).
\]
If $\{w_{j}\}$ is a $UOB$ of $\mathcal{\otimes}^{k}\mathbb{C}^{2}\otimes H$
then
\[
\sum f(w_{j})=c.
\]
If $f$ is an unentangled frame function on $\mathcal{\otimes}^{k}%
\mathbb{C}^{2}\otimes H$ and if $J\subset\Omega_{k}$ we set for $z\in F^{k}$
and $u$ a state in $H$%
\[
\gamma_{J,z}(u)=\sum_{L\subset J}f(\sigma_{L}(z)\otimes u),
\]
then $\gamma_{J,z}(u)=\left\langle u|\phi_{J}(\tau_{J}(z))|u\right\rangle $ with
$\phi_{J}:F^{k-|J|}\rightarrow\mathcal{T}(H)$ and $f$ is given by $(\ast)$.
\end{thm}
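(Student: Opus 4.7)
The plan is to prove both directions in parallel with the qubit-only case of Theorem~\ref{theorem3}, with scalars replaced throughout by expectation values $\langle u|A|u\rangle$ of trace-class operators $A$ on $H$, and with the unentangled Gleason theorem of~\cite{W:Gleason} on $H$ supplying the operator $\phi_J(\tau_J(z))$ at each stage.

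For the converse, I would fix $J\subset\Omega_k$ and $z\in F^k$ and study $u\mapsto\gamma_{J,z}(u)$. Mimicking Lemma~\ref{lemma1}, the set $\{\sigma_L(z):L\subset J\}$ is an orthonormal basis of a $2^{|J|}$-dimensional subspace of $\otimes^k\mathbb{C}^2$ that depends only on $\tau_J(z)$, so for any UOB $\{u_i\}$ of $H$ the family $\{\sigma_L(z)\otimes u_i\}_{L,i}$ is an unentangled orthonormal set. I would extend it to a UOB of $\otimes^k\mathbb{C}^2\otimes H$ by adjoining a UOB $\mathcal{E}$ of the orthogonal complement (product vectors in the complementary subspace of $\otimes^k\mathbb{C}^2$ tensored with a UOB of $H$ suffice); the frame function property of $f$ gives $\sum_i\gamma_{J,z}(u_i)+\sum_{e\in\mathcal{E}}f(e)=c$. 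Since the tail is independent of $\{u_i\}$, $\gamma_{J,z}$ is an unentangled frame function on $H$, and a parallel extension argument using different values of $z$ in the slots indexed by $J$ shows $\gamma_{J,z}$ depends only on $\tau_J(z)$. The unentangled Gleason theorem on $H$, valid since $\dim H=1$ or $\dim H\geq 3$, then yields a trace-class positive self-adjoint $\phi_J(\tau_J(z))\in\mathcal{T}(H)$ with $\gamma_{J,z}(u)=\langle u|\phi_J(\tau_J(z))|u\rangle$. Möbius inversion on the Boolean lattice, exactly as in Lemma~\ref{lemma2}, then recovers $f$ from the $\gamma_{J,z}$ in the form $(\ast)$.

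For the forward direction, I would induct on $k$. When $k=0$ the formula reduces to $f(u)=\langle u|\phi_\emptyset(\omega)|u\rangle$, which sums to $\mathrm{tr}\,\phi_\emptyset(\omega)=c$ on every ONB of $H$ and hence on every UOB. For the inductive step I fix $a\in F$ and set $f_a(y\otimes u)=f(a\otimes y\otimes u)$. Unwinding $(\ast)$ shows $f_a$ is again of the form required by the theorem on $k-1$ qubits, with $\psi_L(w):=\phi_L(a,w)$ for $1\notin L$ and top operator $\phi_{\{2,\ldots,k\}}(a)$. Splitting the defining sum for $f(\hat a\otimes y\otimes u)$ over $L\subset J\cup\{1\}$ according to whether $1\in L$ and cancelling gives, just as in step~1 of the proof of Theorem~\ref{theorem3}, that $g(y\otimes u):=f(a\otimes y\otimes u)+f(\hat a\otimes y\otimes u)$ is independent of $a$ and has the form required by the theorem on $k-1$ qubits with constant $\mathrm{tr}\,\phi_{\Omega_k}(\omega)=c$. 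Given a UOB $\mathcal{B}$ of $\otimes^k\mathbb{C}^2\otimes H$, I would apply the decomposition theorem (Theorem~6 of~\cite{W:Gleason}, in the present infinite-dimensional setting) to write $\mathcal{B}=\{a_i\otimes u_{ij},\hat a_i\otimes v_{ij}\}_{i,j}$ with $V_i$ mutually orthogonal subspaces of $\otimes^{k-1}\mathbb{C}^2\otimes H$ and $\{u_{ij}\}_j,\{v_{ij}\}_j$ UOBs of $V_i$. The inductive hypothesis applied to $f_{a_i}$ shows $\sum_j f_{a_i}(u_{ij})$ depends only on $V_i$ and $a_i$, so I can swap $u_{ij}$ for $v_{ij}$; $a$-independence then lets me replace each $a_i$ by a fixed $a\in F$; and the inductive hypothesis applied to $g$ and the UOB $\{v_{ij}\}_{i,j}$ closes the induction.

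The main obstacle I expect is the infinite-dimensional extension of the UOB decomposition theorem from~\cite{W:Gleason}: one must verify that the decomposition $\mathcal{B}=\{a_i\otimes u_{ij},\hat a_i\otimes v_{ij}\}$ survives when the subspaces $V_i$ may be infinite-dimensional and that each such $V_i$ still admits a UOB. The remaining technical point, that $\phi_J$ takes values in $\mathcal{T}(H)$ and not merely in the self-adjoint operators, is the trace-class conclusion of the unentangled Gleason theorem, whose proof as noted in the excerpt carries over to infinite dimensions.
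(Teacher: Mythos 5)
Your proposal is correct, and for the converse direction it is essentially the paper's argument in a more direct form: the paper first applies Gleason on $H$ to each slice $u\mapsto f(z\otimes u)$ to get an operator $\alpha(z)$, then applies the multi-qubit classification (Lemma~\ref{lemma1}) to $z\mapsto\langle u|\alpha(z)|u\rangle$ for each fixed $u$, and finally applies Gleason on $H$ once more to $u\mapsto\xi_{u,J}(\tau_J z)$ to produce $\phi_J$; you instead establish directly, by the Lemma~\ref{lemma1}-style extension argument carried out in $\otimes^k\mathbb{C}^2\otimes H$, that $\gamma_{J,z}$ is a frame function on $H$ depending only on $\tau_J(z)$, and then invoke Gleason once. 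The two routes are interchangeable, and both end with the same M\"obius inversion. Where you genuinely diverge is the forward direction: the paper disposes of it with the single remark that one can ``run the argument backwards using the result with $H=\mathbb{C}$,'' which by itself only yields that the two families of slice functions $z\mapsto f(z\otimes u)$ and $u\mapsto f(z\otimes u)$ are frame functions on their respective factors; passing from that to summation over an arbitrary (non-product) UOB of $\otimes^k\mathbb{C}^2\otimes H$ still requires the structure theorem for UOBs and the induction of Theorem~\ref{theorem3}. Your explicit induction on $k$, with base case $k=0$ handled by $\sum_i\langle u_i|A|u_i\rangle=\mathrm{tr}\,A$ and the inductive step repeating the $V_i$-decomposition and the $a$-independence of $f(a\otimes y\otimes u)+f(\hat a\otimes y\otimes u)$, supplies exactly the missing detail, and you correctly isolate the two points that need care in infinite dimensions (the validity of the decomposition of Theorem 6 of \cite{W:Gleason} when the $V_i$ are infinite dimensional, and the trace-class conclusion of Gleason's theorem), both of which the paper passes over silently.
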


\begin{proof}
If $f$ is an unentangled frame function on $\mathcal{\otimes}^{k}%
\mathbb{C}^{2}\otimes H$ and if $z$ is a state in $\mathcal{\otimes}%
^{k}\mathbb{C}^{2}$ then, as above, $f_{z}(u)=f(z\otimes u)$ is a frame
function on $H$. Thus there exists a trace class self-adjoint operator on $H$,
$\alpha(z)$, such that $f_{z}(u)=\left\langle u|\alpha(z)|u\right\rangle $ for
$u$ a state in $H$. Similarly we see that for fixed $u$ a state in $H$,
\[
z\longmapsto\left\langle u|\alpha(z)|u\right\rangle
\]
defines an unentangled frame function on $\mathcal{\otimes}^{k}\mathbb{C}^{2}%
$. Thus there exist for each $u$ a state in $H$ and for $J\subset\Omega_{k}$
we have a function $\xi_{u,J}$ defined by%
\[
\xi_{u,J}(\tau_{J}z)=\sum_{L\subset J}\left\langle u|\alpha(\sigma
_{L}z)|u\right\rangle
\]
for $z$ a state in $\mathcal{\otimes}^{k}\mathbb{C}^{2}$. Now for each fixed
$z$ a state in $\mathcal{\otimes}^{k}\mathbb{C}^{2}$ and $J\subset\Omega_{k}$
\[
u\longmapsto\xi_{u,J}(\tau_{J}z)
\]
is a frame function on $H$. Thus
\[
\xi_{u,J}(\tau_{J}z)=\left\langle u|\phi_{J}(\tau_{J}z)|u\right\rangle
\]
and the rest of the argument is clear. We also note that we can run this
argument backwards using the result with $H=\mathbb{C}$ to prove the
converse (second part of the theorem).
\end{proof}

\begin{remark}
In the above, the frame functions can sum up to an arbitrary constant  $c$
on a UOB. To be a generalized version of a mixed state in quantum mechanics,
as required by eq.~\eqref{ffsum}, we need to set $c=1$. Further, the
definition~\ref{ffdef} imposes the obvious inequalities on  the
sums in $\overset{}{(\ast)}$ that they be non-negative.
\end{remark}

\section{Conclusion} \label{section:conc}

We have classified the unentangled frame functions, first for the
multi-qubit system and then generally when tensor factors are of different
dimensions, including separable Hilbert spaces. The proofs use the theory of
M{\"o}bius functions to explicitly show the combinatorial nature of the
multi-qubit UOB encountered in~\cite{lsw:lduobag}, and the multi-qubit
unentangled frame functions quantify the result of measurements via such
UOB. The structure of the frame functions thus revealed is sufficiently
elegant that we surmise it points to interesting physical interpretations
within the fundamentals of quantum mechanics. Indeed, qubit is the most
basic quantum system, and it is common to see it as a subsystem in quantum
algorithms, and it is also not unusual  to use spatio-temporally separated
measurements, which are by very definition, unentangled. Thus the
information gleaned by such measurements falls within the context of
analysis in this paper. On the other hand, if all the systems being measured
have dimensions at least $3$, then the conclusion of the unentangled
Gleason's~\cite{W:Gleason} theorem applies, which agree with the  original
Gleason's theorem. Another area where we often encounter a mix of systems is
the Hydrogen atom. Its phase space is a spin $\frac{1}{2}$ space tensored
with  $\ell^2(\mathbb{R}^3)$, so it is precisely a sub-case of the  general
case we discuss in Section~\ref{section:genframefun}. These are two of the
more obvious examples, but underline a need to understand  the significance
of unentangled measurements.


\begin{acknowledgments} 
The authors would like to thank David Meyer for productive discussions.
\end{acknowledgments}


\end{document}